\documentclass[a4paper]{article}

\usepackage{pstricks}
\usepackage{pst-node}
\usepackage[active]{srcltx}
\usepackage[utf8]{inputenc} 
\usepackage[T1]{fontenc}
\usepackage[english]{babel}
\usepackage{comment}
\usepackage{epsfig}
\usepackage{amssymb,amsmath,amsthm}
\usepackage{xspace}
\usepackage{enumerate}
\usepackage{longtable}

\theoremstyle{plain}
\newtheorem{theorem}{Theorem}
\newtheorem{lemma}[theorem]{Lemma}
\newtheorem{fact}[theorem]{Fact}
\newtheorem{proposition}[theorem]{Proposition}

\theoremstyle{definition}

\newtheorem{conjecture}{Conjecture}

\theoremstyle{remark}

\theoremstyle{remark}

\date{February 24, 2011}

\begin{document}
\title{\bf The Domination Number of Grids\thanks{This work has been
    partially supported by the ANR Project GRATOS ANR-JCJC-00041-01.}}
\author{Daniel Gon\c calves\thanks{Universit\'e Montpellier 2, CNRS, LIRMM
    161 rue Ada,
    34095 Montpellier Cedex, France. 
    Email: \texttt{\{daniel.goncalves | alexandre.pinlou | stephan.thomasse\}@lirmm.fr}}\addtocounter{footnote}{-1}
  \and Alexandre Pinlou\footnotemark\ \ \thanks{Département Mathématiques
    et Informatique Appliqués, Université Paul-Valéry, Montpellier 3,
    Route de Mende, 34199 Montpellier Cedex 5, France.}\and Micha\"el Rao\thanks{CNRS, Laboratoire J.-V. Poncelet, Moscow, Russia. 
    Universit\'e Bordeaux - CNRS,
    LABRI, 351, cours de la Lib\'eration 33405 Talence, France. 
    Email: \texttt{michael.rao@labri.fr}}\addtocounter{footnote}{-3} \and St\'ephan Thomass\'e\footnotemark } 

\maketitle

\def\si{\ \mbox{\rm{if}} \ }
\def\et{\ \mbox{\rm{and}} \ }
\def\ou{\ \mbox{\rm{or}} \ }
\def\form#1#2{\left \lfloor \frac{(#1+2)(#2+2)}{5} \right \rfloor -4}
\def\frceil#1#2{\lceil\frac{#1}{#2}\rceil}
\def\frfloor#1#2{\lfloor\frac{#1}{#2}\rfloor}
\def\ceil#1{\lceil#1\rceil}
\def\fG#1#2{G_{#1,#2}}
\def\gG#1#2{\gamma(G_{#1,#2})}
\def\Gnm{G_{n,m}}

\def\fV#1#2{V_{#1,#2}}
\def\Vnm{V_{n,m}}

\def\los#1{\ell(#1)}
\def\lnm{\ell_{n,m}}
\def\lif{l_{i,f}}

\def\matS{C}
\def\matAS{M}
\def\matMin{M'}

\def\matL{L}
\def\matT{T}

\def\bnm{b_{n,m}}

\def\fB#1#2{B_{#1,#2}}
\def\Bnm{B_{n,m}}

\def\fBp#1#2{I(B_{#1,#2})}
\def\Bpnm{I(B_{n,m})}

\def\win{w^{in}}
\def\wout{w^{out}}

\def\mF{\mathcal{F}}

\def\fP#1{P_{#1}}
\def\Pp{P_p}
\def\Ppp{I(P_p)}

\def\fQ#1{Q_{#1}}
\def\Qp{Q_p}
\def\fR#1{R_{#1}}
\def\Rp{R_p}
\def\fS#1{O_{#1}}
\def\Sp{O_p}

\def\fC#1{C_{#1}}
\def\Op{O_p}
\def\Ip{I_p}
\def\Cq{C_q}

\def\fnm{f_{n,m}}
\def\fmn{f_{m,n}}

\def\mW{\mathcal{W}}
\def\mD{\mathcal{D}}

\def\Dwwp{\mD^{w,w'}_p}
\newcommand{\Dww}[1]{\mD^{w,w'}_{#1}}

\begin{abstract} In this paper, we conclude the calculation of the
  domination number of all $n\times m$ grid graphs. Indeed, we prove
  Chang's conjecture saying that for every $16\le n\le m$,
  $\gamma(\Gnm)=\form{n}{m}$. 
\end{abstract}

\section{Introduction}

A {\it dominating set} in a graph $G$ is a subset 
of vertices $S$ such that every vertex in $V(G)\setminus S$
is a neighbour of some vertex of $S$. The {\it domination number} 
of $G$ is the minimum size of a dominating set of $G$. We 
denote it by $\gamma(G)$. This paper is devoted to the calculation 
of the domination number of complete grids.

The notation $[i]$ denotes the set $\{1,2,\ldots, i\}$. If $w$ is a
word on the alphabet $A$, $w[i]$ is the $i$-th letter of $w$, and 
for every $a$ in  $A$, $\vert w \vert_a$ denotes the number of occurrences of 
$a$ in $w$ (i.e. $\vert \{ i\in \{1,\ldots,\vert w\vert\} 
: w[i]=a\}\vert$). 
For a vertex $v$, $N[v]$ denotes the closed neighbourhood of $v$ 
(i.e. the set of
neighbours of $v$ and $v$ itself). For a subset of
vertices $S$ of a vertex set $V$ of a graph, we denote by $N[S] =
\bigcup_{v\in S}N[v]$.  Note that $D$ is a dominating set of $G$ if
and only if $N[D] = V(G)$. Let $\Gnm$ be the $n\times m$ complete
grid, i.e. the vertex set of $\Gnm$ is $\Vnm:=[n]\times [m]$, and two
vertices $(i,j)$ and $(k,l)$ are adjacent if
$|k-i|+|l-j|=1$. The couple $(1,1)$ denotes the bottom-left vertex of
the grid and the couple $(i,j)$ denotes the vertex of the $i$-th column
and the $j$-th row. We will always assume in this paper that $n\leq
m$. 
Let us illustrate our purpose by an example of a dominating set of the
complete grid $\fG{24}{24}$ on Figure~\ref{fig:24x24}.

\begin{figure}
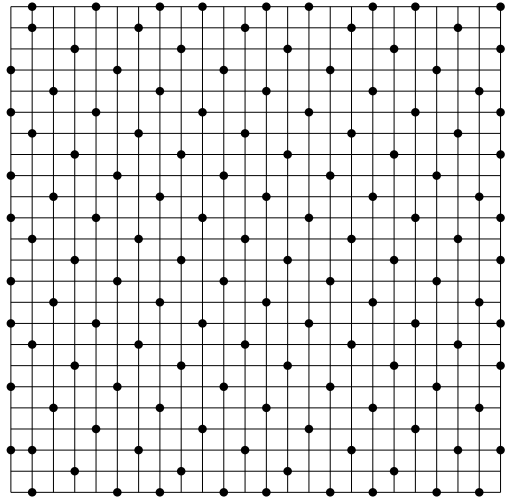

  \centering
  \psset{unit=0.28cm}
  \include{d24x24}
  \caption{Example of a set of size 131 dominating the grid $\fG{24}{24}$\label{fig:24x24}}
\end{figure}

The first results on the domination number of grids were obtained
about 30 years ago with the exact values of $\gG2n$, $\gG3n$, and $\gG4n$
found by Jacobson and Kinch \cite{JK} in 1983. In 1993,  
Chang and Clark~\cite{CC} found those of $\gG5n$
and $\gG6n$. These results were obtained analytically.
Chang~\cite{Chang} devoted his PhD thesis to study the domination
number of grids; he conjectured that this invariant
behaves well provided that $n$ is large enough.
Specifically, Chang conjectured the following:

\begin{conjecture}[\cite{Chang}]\label{conj}
  For every $16\le n \le m$, $$\gamma(\Gnm) = \form{n}{m}.$$
\end{conjecture}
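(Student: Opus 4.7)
The statement decomposes into an upper bound (construction) and a lower bound (main technical content). For the upper bound, I would exhibit an explicit dominating set of size $\form{n}{m}$ based on the diagonal pattern $D_0=\{(i,j)\in\Vnm:2i+j\equiv 0\pmod 5\}$, which packs closed neighborhoods perfectly in the infinite grid and hence dominates the bulk of $\Vnm$; this is augmented by ad hoc placements near the four sides to cover vertices missed by $D_0$, yielding the pattern visible in Figure~\ref{fig:24x24}. A case analysis on $n\bmod 5$ and $m\bmod 5$ gives the exact count.

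For the lower bound $\gamma(\Gnm)\ge\form{n}{m}$, the plan is to embed $\Vnm$ in the extended grid $V^+:=\{0,\ldots,n{+}1\}\times\{0,\ldots,m{+}1\}$, and write $N^*[v]$ for the closed neighborhood of $v$ in the infinite grid. For every $v\in D\subseteq\Vnm$ we have $|N^*[v]|=5$ and $N^*[v]\subseteq V^+$, while the four extreme corners of $V^+$ lie in no such $N^*[v]$. Hence $5|D|=\sum_{u\in V^+}f(u)$, where $f(u):=|\{v\in D:u\in N^*[v]\}|$, and the target inequality is equivalent to $\sum_{u\in V^+}f(u)\ge (n+2)(m+2)-20$. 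Since $D$ dominates $\Vnm$, the interior already contributes $nm$ to this sum; the remaining $2n+2m-16$ units must be extracted from the \emph{frame} $V^+\setminus\Vnm$, using that boundary cells of $\Vnm$ can only be dominated by specific $D$-vertices whose $N^*$-balls necessarily protrude into the frame.

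The main obstacle is exactly this frame contribution: frame cells are not automatically in $N^*[D]$, so a pure counting argument falls short. My plan is to proceed by induction on $m$ for each fixed $n\in[16,M]$ (for some computable threshold $M\sim 30$), where the base cases are established by the transfer-matrix / dynamic-programming method whose states are valid domination profiles of five consecutive columns, and the inductive step peels off a width-$5$ strip whose cost must be at least $n+2$ (matching $\form{n}{m}-\form{n}{m-5}$). Proving this strip lower bound via a local discharging argument — redistributing potential from interior cells toward frame cells so that each $v\in D$ absorbs at most $5$ units — is the core difficulty; the hypothesis $n\ge 16$ presumably enters precisely when the grid is wide enough that the discharging rules absorb boundary irregularities without disturbing the per-strip bound (consistent with the separately tabulated formulas for $n\le 15$). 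A final subtlety is the integrality refinement needed to recover the floor function in $\form{n}{m}$ rather than its unfloored value.
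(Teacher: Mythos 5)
Your upper-bound sketch is fine (it is essentially Chang's construction, which the paper only cites), but the lower bound --- the whole content of the conjecture --- is not proved in your proposal; it is only planned, and the plan has a concrete hole exactly at the point you yourself call the core difficulty. The claim that peeling off a width-$5$ strip must cost at least $n+2$ is never established: the cost of a strip is not even well defined for an arbitrary dominating set, because vertices adjacent to the cut dominate across it, so any honest version of the claim has to be amortized over boundary profiles (which is precisely what transfer-matrix states do), and no discharging rules are exhibited. In inductive form your step amounts to $\gamma(G_{n,m})\ge\gamma(G_{n,m-5})+(n+2)$, which is essentially as hard as the theorem itself; this is where all hand arguments have stalled historically, and it is why the known proofs are computer-assisted. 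The paper does something genuinely different at this point: it bounds the loss $5|D|-|N[D]|$ contributed by a border of thickness $10$, encodes each side by ten-letter words over $\{0,1,2\}$ (there are $8119$ admissible ones), computes minimum losses by $(\min,+)$ transfer matrices, proves eventual linearity of the entries ($M_{126}=M_{125}+1$), and then a finite computation certifies $\lnm\ge 2(n+m)-20$ for all $24\le n\le m$; the cases $16\le n\le 23$ come from Fisher-type dynamic programming. Your proposal supplies no substitute for that computation. A second structural problem: your induction is on $m$ for fixed $n\in[16,M]$ with $M\approx 30$, because the profile DP furnishing your base cases is only feasible for bounded $n$; grids with both $n$ and $m$ large are therefore never reached unless you also peel strips in the orthogonal direction, which again requires the same unproven strip inequality.

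There is also a smaller but genuine error in the counting reduction. The inequality $\gamma(\Gnm)\ge\form{n}{m}$ is \emph{not} equivalent to $\sum_{u\in V^+}f(u)\ge (n+2)(m+2)-20$; indeed, if the conjecture holds and $(n+2)(m+2)\not\equiv 0\pmod 5$, an optimal $D$ satisfies $5|D|=(n+2)(m+2)-20-((n+2)(m+2)\bmod 5)<(n+2)(m+2)-20$, so the stated target is unprovable as a bound on all dominating sets. The correct target is $5|D|\ge (n+2)(m+2)-24$, from which the floored formula follows by integrality, since $\left\lceil\frac{(n+2)(m+2)-24}{5}\right\rceil=\left\lfloor\frac{(n+2)(m+2)}{5}\right\rfloor-4$; this $-24$ slack is exactly what the paper's bound $\lnm\ge 2(n+m)-20$ delivers.
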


Observe that for instance, this formula would give 131 for the domination number 
of the grid in Figure~\ref{fig:24x24}. To motivate his bound, Chang proposed some constructions of
dominating sets achieving the upper bound:

\begin{lemma}[\cite{Chang}]\label{lem:cgang}
  For every $8\le n\le m$, $$\gamma(\Gnm) \le \form{n}{m}$$
\end{lemma}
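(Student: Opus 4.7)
The plan is to exhibit an explicit dominating set of $\Gnm$ of size at most $\form{n}{m}$. The natural template is the \emph{diagonal pattern}
\[
D_c := \{(i,j) \in \Vnm : 2i + j \equiv c \pmod 5\}
\]
for some $c \in \{0,1,2,3,4\}$. A direct check shows that in the infinite grid $\mathbb{Z}^2$ the closed neighbourhoods $\{N[v] : v \in D_c\}$ form a partition; hence inside $\Vnm$ the pattern $D_c$ dominates every vertex whose closed neighbourhood lies entirely in $\Vnm$, so only vertices within distance one of the four sides can be left un-dominated.

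To motivate the precise shape of the bound, I view $\Vnm$ as sitting inside a fictitious inflated grid $V'$ of dimensions $(n+2)\times(m+2)$. For a well-chosen $c$ one has $|D_c \cap V'| = \lfloor (n+2)(m+2)/5 \rfloor$, and the dominators of this fictitious pattern that fall on the outer ring project (by a shift of one cell inwards) to dominators of $\Vnm$ placed on or near the boundary, at no extra cost. The four fictitious \emph{corner} cells of $V'$ correspond to no cross centre at all, which accounts for the ``$-4$''. Turning this heuristic into an actual construction is then a matter of (i) restricting the bulk pattern to the interior of $\Vnm$, (ii) adjusting each of the four sides so that every boundary vertex becomes dominated, using local modifications within a strip of constant width, and (iii) handling each of the four corners in a bounded patch.

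Because the pattern is modular, the construction depends on $(n \bmod 5, m \bmod 5)$. I would organise the work into these $25$ residue classes, reducing their number via the transposition symmetry (free since we assume $n\le m$) and the centre-symmetry $(i,j)\mapsto(n+1-i,m+1-j)$. In each class one picks the phase $c$ that best aligns $D_c$ with the short side, then describes the four boundary strips (possibly by translating, shifting, or locally densifying $D_c$) and the four corner patches; finally one verifies domination and counts the dominators.

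The main obstacle is the tight accounting with the floor function: the bound leaves no slack whatsoever, so in each residue class one must choose $c$ so that $|D_c\cap V'|$ attains \emph{exactly} $\lfloor (n+2)(m+2)/5\rfloor$ and then carry out the boundary and corner patches without introducing a single superfluous dominator. The hypothesis $n\ge 8$ is what guarantees that the four corner patches, the two side patches along the short sides, and the two side patches along the long sides are pairwise disjoint and therefore contribute additively to the count; for smaller $n$ the patches would collide and the additive bookkeeping would fail, which is consistent with the irregular behaviour of $\gG{n}{m}$ for $n\le 7$ that is known from the earlier exact results cited above.
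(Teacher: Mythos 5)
The paper itself offers no proof of this lemma: it is quoted from Chang's thesis (and the construction is re-cited as \cite{CHHW} in the final theorem), so there is no in-paper argument to compare yours against; I can only judge the proposal on its own terms. Your starting point is the right one and is indeed the classical route: the pattern $D_c=\{(i,j): 2i+j\equiv c \pmod 5\}$ is a perfect dominating set of the infinite grid, the interesting action is within distance $O(1)$ of the four sides, and the target value $\form{n}{m}$ is naturally read off the inflated $(n+2)\times(m+2)$ grid with four corner savings.

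However, what you have written is a plan for a proof, not a proof. The entire content of the lemma lies exactly in the steps you defer: specifying the boundary strips and corner patches explicitly, verifying that every vertex of $\Gnm$ is dominated, and counting the resulting set \emph{exactly}. Because the bound has no slack, the heuristic paragraph about projecting the dominators of the fictitious outer ring inwards ``at no extra cost'' and the four fictitious corners ``accounting for the $-4$'' cannot stand as stated; whether the projection costs nothing, and whether each corner really saves one vertex rather than zero, is precisely what a correct construction must demonstrate, and it genuinely depends on the residues of $n$ and $m$ modulo $5$ and on the chosen phase $c$ (note also that for a given $c$ one only knows $|D_c\cap V'|\le \lfloor (n+2)(m+2)/5\rfloor$ by averaging over $c$, not equality, so even the bulk count needs the case analysis you postpone). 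Similarly, the role of the hypothesis $n\ge 8$ is asserted (patches do not collide) but never checked against concrete patches. To turn this into a proof you would need, for each residue class (up to the symmetries you mention), an explicit dominating set together with its cardinality computation --- i.e.\ essentially reconstructing Chang's construction --- none of which is present yet.
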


Later, some algorithms based on dynamic programming were designed to
compute a lower bound of $\gG{n}{m}$. There were numerous
intermediate results found for $\gG{n}{m}$ for small values of
$n$ and $m$ (see \cite{CCH,HHH,SP} for details). In 1995, Hare,
Hedetniemi and Hare \cite{HHH} gave a polynomial time algorithm to
compute $\gamma(\Gnm)$ when $n$ is fixed. Nevertheless, this
algorithm is not usable in practice when $n$ hangs over $20$. Fisher
\cite{Fisher} developed the idea of searching for periodicity in the
dynamic programming algorithms and using this technique, he found the
exact values of $\gG{n}{m}$ for all $n\leq 21$. We recall these values
for the sake of completeness.

\begin{theorem}[\cite{Fisher}]
  For all $n \le m$ and $n \le 21$, we have:
  $$\gG{n}{m} = \left\{
    \begin{array}{ll}
      \frceil{m}{3} & \si n=1 \\*[0.1cm]
      \frceil{m+1}{2} & \si n=2 \\*[0.1cm]
      \frceil{3m+1}{4} & \si n=3 \\*[0.1cm]
      m+1 & \si n=4  \et m=5,6,9\\*[0.1cm]
      m & \si n=4  \et m\neq 5,6,9\\*[0.1cm]
      \frceil{6m+4}{5} - 1 & \si n=5 \et m=7 \\*[0.1cm]
      \frceil{6m+4}{5} & \si n=5 \et m\neq 7 \\*[0.1cm]
      \frceil{10m+4}{7} & \si n=6 \\*[0.1cm]
      \frceil{5m+1}{3} & \si n=7 \\*[0.1cm]
      \frceil{15m+7}{8} & \si n=8 \\*[0.1cm]
      \frceil{23m+10}{11} & \si n=9 \\*[0.1cm]
      \frceil{30m+15}{13} - 1 & \si n=10 \et m\equiv_{13}10 \ou m=13,16
      \\*[0.1cm]
      \frceil{30m+15}{13} & \si n=10 \et m\not\equiv_{13}10 \et m\neq13,16 \\*[0.1cm]
      \frceil{38m+22}{15} - 1 & \si n=11 \et m=11,18,20,22,33 \\*[0.1cm]
      \frceil{38m+22}{15} & \si n=11 \et m\neq11,18,20,22,33 \\*[0.1cm]
      \frceil{80m+38}{29} & \si n=12 \\*[0.1cm]
      \frceil{98m+54}{33} - 1 & \si n=13 \et m\equiv_{33}13,16,18,19 \\*[0.1cm]
      \frceil{98m+54}{33} & \si n=13 \et m\not\equiv_{33}13,16,18,19 \\*[0.1cm]
      \frceil{35m+20}{11} - 1 & \si n=14 \et m\equiv_{22}7\\*[0.1cm]
      \frceil{35m+20}{11} & \si n=14 \et m\not\equiv_{22}7\\*[0.1cm]
      \frceil{44m+28}{13} - 1 & \si n=15 \et  m\equiv_{26}5\\*[0.1cm]
      \frceil{44m+28}{13} & \si n=15 \et  m\not\equiv_{26}5\\*[0.1cm]
      \frfloor{(n+2)(m+2)}{5}-4 & \si n\ge16 \\*[0.1cm]
    \end{array}\right.$$
  
\end{theorem}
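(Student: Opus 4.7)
The plan is to prove the formula separately for each fixed $n \in \{1,\dots,21\}$ by column-by-column dynamic programming, following Fisher's approach. For a fixed $n$, the state after processing the first $k$ columns of $\Gnm$ is a profile $\sigma:[n]\to\{D,\bar D_{\textrm{dom}},\bar D_{\textrm{und}}\}$ recording, for each row $i$, whether $(k,i)$ has been placed in the partial dominating set, lies outside it but is already dominated, or lies outside it and is still waiting to be dominated from column $k+1$. Transitions from $\sigma$ to $\sigma'$ enumerate the choices of which vertices of column $k+1$ enter $D$, subject to the requirement that every $\bar D_{\textrm{und}}$-vertex of column $k$ gets dominated; the cost of a transition is the number of vertices added.

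With this setup, $\gG{n}{m}$ is the cost of a shortest path of length $m$ in a layered weighted graph with at most $3^n$ states per layer, equivalently the minimum entry of the $(\min,+)$ power $T_n^m$ of a fixed tropical transition matrix $T_n$. Because $T_n$ is a fixed tropical matrix, the sequence $(T_n^k)_{k\ge 0}$ is eventually periodic up to an additive shift: there exist $k_0$, $p$ and $c$ with $T_n^{k+p}=T_n^k+cp$ for every $k\ge k_0$, and in particular $\gG{n}{m+p}=\gG{n}{m}+cp$ for all sufficiently large $m$. Reading off the slope $c/p$ and the correct intercept yields the ceilings $\lceil (am+b)/d\rceil$ listed in the statement, while the residual oscillation in the preperiod accounts for the finitely many $-1$ exceptions (such as $n=5,m=7$ or $n=10,m\equiv_{13}10$), each witnessed by an explicit dominating set one vertex smaller than the generic formula.

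The matching \emph{upper} bounds come from explicit constructions: Lemma~\ref{lem:cgang} covers $8\le n\le 21$ directly, and for $n\le 7$ one uses the patterns of Jacobson--Kinch and Chang--Clark. The \emph{lower} bounds are the content of the DP itself; for the finitely many $m$ below the preperiod threshold one simply reads off the stored DP table, and for the remaining $m$ one invokes the periodicity identity. Combining both directions yields each case of the piecewise formula.

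The main obstacle is the size of the state space. A naive enumeration has roughly $3^n$ profiles, which for $n=21$ is of order $10^{10}$, well beyond a direct tabulation of $T_n$. The fix is to retain only \emph{reachable} profiles and to quotient by the equivalence relation identifying states with identical tropical future-cost vectors; after these reductions the effective state count shrinks by several orders of magnitude, and $n$ up to $21$ becomes tractable while $n=22$ already exceeds the memory Fisher had available. A second, subtler point is that detecting periodicity numerically is not itself a proof: one must argue that the observed relation $T_n^{k_0+p}=T_n^{k_0}+cp$ propagates to every larger exponent. This follows from the sub-additivity of tropical matrix powers together with the finiteness of the state set, but it has to be spelled out carefully to promote the computational experiment into a mathematical proof.
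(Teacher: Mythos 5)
The paper does not prove this theorem at all: it is recalled verbatim from Fisher's manuscript \cite{Fisher}, and your sketch is precisely Fisher's method (column-by-column profiles, a fixed $(\min,+)$ transfer matrix, computational detection of an additive period, then propagation), which is also the machinery the paper itself re-uses in Section~2. So the approach is the right one; two points in your write-up deserve correction. First, the promotion of the observed identity $T_n^{k_0+p}=T_n^{k_0}+cp$ to all larger exponents does not need ``sub-additivity plus finiteness'': it is the one-line consequence of $(A+c)\otimes B=(A\otimes B)+c$, exactly as the paper argues in Fact~\ref{fact126}; conversely, your a priori claim that any fixed tropical matrix has eventually periodic powers is only guaranteed under irreducibility (the min-plus cyclicity theorem), but you never need that a priori statement, since the identity is verified by the computation itself. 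Second, Lemma~\ref{lem:cgang} does not supply the matching upper bounds for $8\le n\le 15$: Chang's construction gives $\lfloor (n+2)(m+2)/5\rfloor-4$, which strictly exceeds the tabulated values in that range (e.g.\ $2m$ versus $\lceil (15m+7)/8\rceil$ for $n=8$), so the tight upper bounds there must be read off the DP/periodic patterns themselves (the DP is exact, so it furnishes both directions); Chang's lemma only matches for $16\le n\le 21$. Neither point breaks the argument, but as written those two justifications are wrong and would need to be replaced by the ones above.
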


Note that these values are obtained by specific formulas for every $1\leq n\leq 15$ and by
the formula of Conjecture~\ref{conj} for every $16\le n \le 21$. This
proves Chang's conjecture for all values $16\le n\le 21$.

In 2004, Conjecture~\ref{conj} has been confirmed up to an 
additive constant:

\begin{theorem}[Guichard \cite{Guichard}] For every $16\le n \le m$,
  $$\gamma(\Gnm) \geq \left \lfloor \frac{(n+2)(m+2)}{5} \right
  \rfloor -9.$$
\end{theorem}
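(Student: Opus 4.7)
The plan is to establish the lower bound by a covering/discharging argument on an enlarged grid. Let $D$ be a minimum dominating set of $\Gnm$ and set $V^+:=\{0,1,\ldots,n+1\}\times\{0,1,\ldots,m+1\}$, so that $|V^+|=(n+2)(m+2)$; the cells in $V^+\setminus \Vnm$ will be called \emph{phantom}. Viewing $V^+$ itself as a grid, for each $u\in D$ let the \emph{territory} $T(u)\subseteq V^+$ be the closed neighborhood of $u$ computed inside $V^+$. Since every vertex of $\Vnm$ is strictly interior to $V^+$, we have $|T(u)|=5$ for all $u\in D$, even when $u$ lies on the boundary of $\Vnm$ (each missing real neighbor of $u$ is replaced by its phantom counterpart). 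Consequently
\[
5|D|\;=\;\sum_{u\in D}|T(u)|\;\geq\;\Bigl|\bigcup_{u\in D}T(u)\Bigr|,
\]
and the theorem reduces to showing that $\bigcup_{u\in D}T(u)$ misses at most $45$ cells of $V^+$, since this yields $5|D|\geq(n+2)(m+2)-45$ and hence $|D|\geq\lfloor(n+2)(m+2)/5\rfloor-9$.

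Because $D$ dominates $\Vnm$, every real cell is covered, so the missed cells are necessarily phantom. The four corner phantom cells are unavoidably uncovered (none of their $V^+$-neighbors lies in $\Vnm$), while an edge phantom cell such as $(0,j)$ with $1\le j\le m$ is covered precisely when $(1,j)\in D$. A purely local count thus leaves room for as many as $2(n+m)-4$ uncovered edge phantom cells, far too many for the desired $O(1)$ slack; so the map $u\mapsto T(u)$ must be refined near the boundary.

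The heart of the proof --- and the step I expect to require the most care --- is this refinement. Whenever $(1,j)\notin D$ the dominating condition forces one of $(2,j)$, $(1,j-1)$, $(1,j+1)$ to lie in $D$, and the plan is to redirect the territory of that vertex so that it absorbs the phantom cell $(0,j)$ at the expense of a single interior cell. Doing this globally without creating a cascade of new gaps is delicate, since one rerouting can uncover an inside cell that would then need to be recovered by another $D$-vertex. My intended mechanism is a transfer-matrix analysis of the two outermost boundary rows of $D$: the finitely many admissible patterns that these two rows can form split into \emph{good} states (no rerouting needed) and \emph{bad} runs, and the dominating requirement forces any long bad run to carry a strictly supra-baseline density of $D$-vertices that exactly pays for every rerouting inside it. Only the $O(1)$ transitions between runs and the four corners resist amortization, and the hypothesis $n\ge 16$ is used precisely to guarantee each side of the grid is long enough for this amortization to apply everywhere except at those bounded transition zones; summing the leftover contributions bounds the total number of uncovered phantom cells by $45$, giving the claimed inequality.
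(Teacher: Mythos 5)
Your reduction step is sound but contributes nothing beyond a reformulation: with the modified territories $T'(u)$ of size at most $5$, the inequality $5|D|\ge\bigl|\bigcup_u T'(u)\bigr|$ is trivial, so the assertion ``after rerouting at most $45$ cells of $V^+$ are missed'' is essentially equivalent to the theorem itself. All of the content therefore sits in the amortization claim, and that claim is never proved: you give no definition of the states of your transfer matrix, no proof that a ``bad run'' forces supra-baseline density, and no computation producing the constant $45$. Worse, the specific plan of analysing only \emph{the two outermost boundary rows} cannot work. The inefficiency forced by a border side is not visible in two rows: one can dominate row $1$ entirely from within row $1$ with vertices spaced three apart, using only about $m/3$ vertices in rows $1$--$2$, which is \emph{below} the baseline density $2m/5$ for those two rows; the unavoidable waste then materialises only several rows further in, through the extra density needed in rows $2$--$5$ to dominate what the sparse boundary pattern left uncovered. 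So a two-row window can certify neither the rerouting budget nor the claim that only $O(1)$ transition zones escape amortization, and a rerouted interior cell may itself be uncovered, triggering exactly the cascade you acknowledge but do not control.

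For comparison, note that the present paper does not reprove this statement (it is quoted from Guichard); what the paper does prove, the sharper $-4$ bound, is obtained by bounding the loss $\ell(D)=5|D|-|N[D]|$ on a border band of thickness $10$--$12$, using transfer matrices indexed by the $8119$ admissible ten-letter words and a machine computation (Lemma~\ref{lem:bnm}, Facts~\ref{fact126} and~\ref{fact23}); Guichard's own argument likewise works with a band many rows thick. This is the correct scale of analysis: the quantity you must control is exactly the loss near the border, of order $1$ per unit of border length, and it can only be certified by tracking the interaction of the dominating set across a band whose width exceeds the distance over which boundary-induced inefficiency propagates. Until you replace the two-row heuristic with such a band analysis (and actually carry out the state-by-state verification yielding an explicit additive constant), the proposal is a plausible outline of where a proof might live, not a proof.
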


In this paper, we prove Chang's conjecture, hence finishing the
computation of $\gG{n}{m}$.  We adapt Guichard's ideas to improve the
additive constant from $-9$ to $-4$ when $24 \le n \le m$.  Cases
$n=22$ and $n=23$ can be proved in a couple of hours using Fisher's
method (described in~\cite{Fisher}) on a modern computer.  They can be
also proved by a slight improvement of the technique presented in the
next section.

\section{Values of $\gamma(\Gnm)$ when $24 \le n \le m$}

Our method follows the idea of Guichard \cite{Guichard}. A slight
improvement is enough to give the exact bound.  

A vertex of the grid $\Gnm$ dominates at most 5 vertices (its four
neighbours and itself). It is then clear that $\gG{n}{m} \ge
\frac{n\times m}{5}$. The previous inequality would become an equality
if there would be a dominating set $D$ such that every vertex of
$\Gnm$ is dominated only once, and all vertices of $D$ are of degree 4
(i.e.  dominates exactly 5 vertices); in this case, we would have
$5\times|D| - n\times m= 0$. This is clearly impossible (e.g. to
dominate the corners of the grid, we need vertices of degree at most
3). Therefore, our goal is to find a dominating set $D$ of $\Gnm$ such
that the difference $5\times |D| - n\times m$ is the smallest.

Let $S$ be a subset of $V(\Gnm)$. The \emph{loss} of $S$ is $\los{S}=5
\times \vert S \vert - \vert N[S] \vert$.

\begin{proposition} \label{props}
  The following properties of the loss function are straightforward:
  \begin{enumerate}[(i)]
  \item For every $S$, $\los{S} \ge 0$ (positivity), \label{props:1}
  \item If $S_1\cap S_2=\emptyset$, then $\los{S_1\cup S_2} =
    \los{S_1} + \los{S_2} + \vert N[S_1] \cap N[S_2] \vert$, \label{props:2}
  \item If $S'\subseteq S$, then $\los{S'}\le \los{S}$ (increasing
    function), \label{props:3}
  \item If $S_1\cap S_2=\emptyset$, then $\los{S_1\cup S_2} \ge
    \los{S_1} + \los{S_2}$ (super-additivity). \label{props:4}
  \end{enumerate}
\end{proposition}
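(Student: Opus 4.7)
The plan is to prove the four items in the order (ii), (iv), (i), (iii), so that each later property can lean on the earlier ones. Only (ii) carries genuine combinatorial content; the rest follow from very short manipulations.

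First I would dispose of (ii). Since $S_1$ and $S_2$ are disjoint, $|S_1 \cup S_2| = |S_1| + |S_2|$. From the definition of closed neighbourhood one always has $N[S_1 \cup S_2] = N[S_1] \cup N[S_2]$, so inclusion--exclusion gives $|N[S_1 \cup S_2]| = |N[S_1]| + |N[S_2]| - |N[S_1] \cap N[S_2]|$. Substituting both equalities into $\los{S} = 5|S| - |N[S]|$ yields the advertised identity directly.

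Property (iv) is then immediate from (ii), since $|N[S_1] \cap N[S_2]| \ge 0$ can simply be dropped. For (i), I would argue from the definition: the maximum degree in $\Gnm$ is $4$, so $|N[v]| \le 5$ for every vertex $v$, whence $|N[S]| \le \sum_{v \in S} |N[v]| \le 5|S|$ and $\los{S} \ge 0$. Finally, (iii) is obtained by writing $S$ as the disjoint union $S' \cup (S \setminus S')$, applying super-additivity from (iv), and then discarding the non-negative term $\los{S \setminus S'}$ via (i): $\los{S} \ge \los{S'} + \los{S \setminus S'} \ge \los{S'}$.

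No step poses a real obstacle, which is consistent with the authors flagging the proposition as straightforward. The only two things worth checking carefully are the set identity $N[S_1 \cup S_2] = N[S_1] \cup N[S_2]$, which is a definitional fact about closed neighbourhoods in any graph, and the degree bound $|N[v]| \le 5$, which is specific to $\Gnm$ but evident from the adjacency definition.
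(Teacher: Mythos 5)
Your proof is correct, and since the paper states these properties without proof (explicitly calling them straightforward), your argument simply supplies the intended elementary verification: (ii) from disjointness, $N[S_1\cup S_2]=N[S_1]\cup N[S_2]$ and inclusion--exclusion, with (iv), (i) and (iii) following as you describe. The only graph-specific ingredient, the bound $|N[v]|\le 5$ in $\Gnm$, is exactly what the paper relies on, so there is nothing to add.
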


Let us denote by $\lnm$ the minimum of $\los{D}$ when $D$ is 
a dominating set of $\Gnm$. 

\begin{lemma}\label{lemloss}
  $\gamma(\Gnm) = \left \lceil \frac{n\times m + \lnm}{5} \right
  \rceil$
\end{lemma}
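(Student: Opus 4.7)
The key observation is that for any dominating set $D$ of $\Gnm$ we have $N[D]=\Vnm$, so $|N[D]|=nm$ and therefore
$$\los{D}=5|D|-nm, \qquad \text{i.e.}\qquad |D|=\frac{nm+\los{D}}{5}.$$
In particular, for every dominating set the quantity $nm+\los{D}$ is divisible by $5$. The plan is then to exploit this identity in both directions.

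For the lower bound, I would observe that every dominating set $D$ satisfies $\los{D}\ge \lnm$ by definition of $\lnm$, hence $|D|=\frac{nm+\los{D}}{5}\ge\frac{nm+\lnm}{5}$. Taking the minimum over all dominating sets and using that $\gamma(\Gnm)$ is an integer gives
$$\gamma(\Gnm)\ \ge\ \left\lceil\frac{nm+\lnm}{5}\right\rceil.$$

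For the upper bound, I would pick a dominating set $D^\star$ that actually attains the minimum loss, so that $\los{D^\star}=\lnm$ (the minimum is over a finite set of dominating sets of bounded size, so it is attained). Then $|D^\star|=\frac{nm+\lnm}{5}$ is an integer, which incidentally shows that $5$ divides $nm+\lnm$, so the ceiling in the statement is in fact vacuous. Consequently
$$\gamma(\Gnm)\ \le\ |D^\star|\ =\ \frac{nm+\lnm}{5}\ =\ \left\lceil\frac{nm+\lnm}{5}\right\rceil,$$
and combining the two inequalities yields the claimed equality.

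There is really no obstacle here: the lemma is essentially a reformulation of the definition of loss, once one notices that a dominating set has $|N[D]|=nm$. The only point worth stating carefully is the existence of a dominating set achieving $\lnm$, which guarantees that $nm+\lnm\equiv 0\pmod 5$ and hence that the ceiling is only used for cosmetic reasons in the statement.
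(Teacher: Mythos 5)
Your proof is correct and follows essentially the same route as the paper: both rest on the identity $\los{D}=5|D|-nm$ for any dominating set $D$, from which the minimum loss and the minimum size are attained by the same sets and $\lnm=5\gamma(\Gnm)-nm$. Your remark that the ceiling is vacuous (since $5$ divides $nm+\lnm$) is a fair observation consistent with the paper's terser argument.
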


\begin{proof}
  If $D$ is a dominating set of $\Gnm$, then $\los{D} = 5 \times \vert
  D \vert - \vert N[D]\vert = 5 \times \vert
  D \vert - n\times m$. Hence, by minimality of $\lnm$ and
  $\gamma(\Gnm)$, we have $\lnm= 5\times \gamma(\Gnm) - n\times m$.
\end{proof}

Our aim is to get a lower bound for $\lnm$. As the reader can observe
in Figure~\ref{fig:24x24}, the loss is concentrated on the border of
the grid. We now analyse more carefully the loss generated by the
border of thickness $10$. 

\begin{figure}
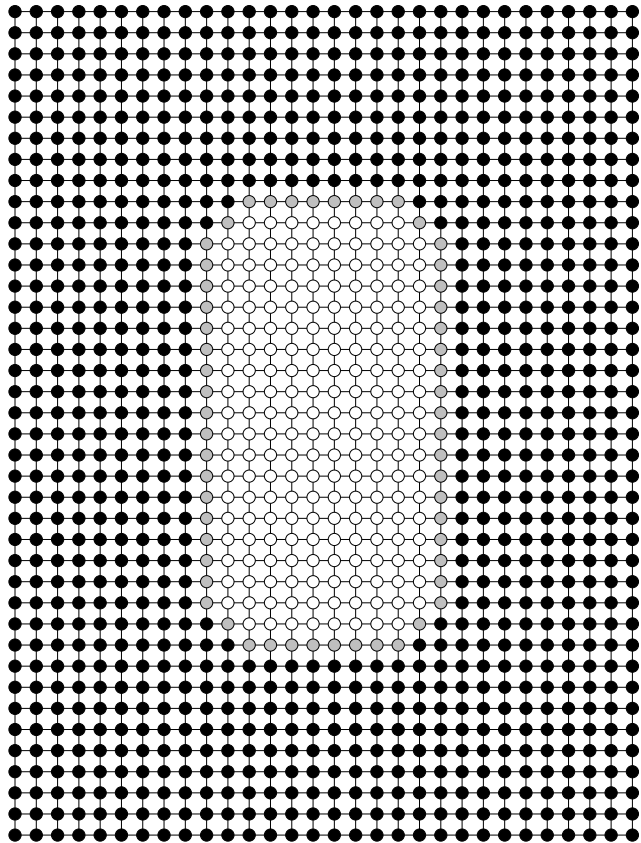

  \centering
  \psset{unit=0.28cm}
  \begin{minipage}{0.9\linewidth}
    \centering \include{B}
  \end{minipage}
  \caption{The graph $\fG{30}{40}$. The set $\fBp{30}{40}$ is the set
    of vertices filled in black. The set $\fB{30}{40}$ is the set of
    vertices filled in black or in gray.\label{fig:bnm}}
\end{figure}

We define the border $\Bnm \subseteq \Vnm$ of $\Gnm$ as the set of
vertices $(i,j)$ where $i\le 10$, or $j\le 10$, or $i>n-10$, or
$j>m-10$ to which we add the four vertices
$(11,11),(11,m-10),(n-10,11),(n-10,m-10)$.  Given a subset $S\subseteq
V$, let $I(S)$ be the \emph{internal vertices} of $S$, i.e. $I(S)=
\{v\in S : N[v]\subseteq S\}$.  These sets are illustrated in
Figure~\ref{fig:bnm}.  We will compute $\bnm = \min_D \los{D}$, where
$D$ is a subset of $\Bnm$ and dominates $\Bpnm$, i.e. $D\subseteq
\Bnm$ and $\Bpnm \subseteq N[D]$. Observe that this lower bound $\bnm$
is a lower bound of $\lnm$. Indeed, for every dominating set $D$ of
$\Gnm$, the set $D':=D\cap \Bnm$ dominates $\Bpnm$, hence $\bnm\leq
\los{D'}\leq \los{D}$. In the remainder, we will compute $\bnm$ and we
will show that $\bnm = \lnm$.

\bigskip

In the following, we split the border $\Bnm$ in four parts, $O_{m-12},
P_{n-12}, Q_{m-12}, R_{n-12}$, which are defined just below.

\begin{figure}
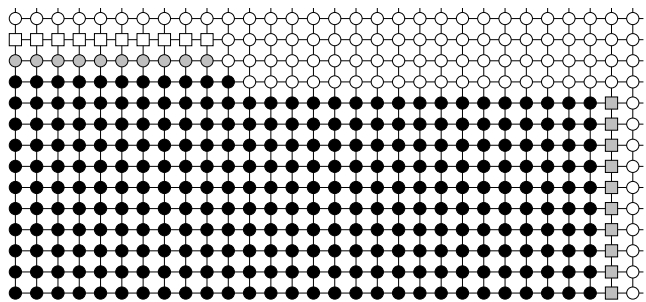

  \centering
  \psset{unit=0.28cm}
  \include{Q}
  \caption{The set $\fP{19}$ (black and gray), the set of input vertices
    (gray circles) and the set of output vertices (gray squares).\label{fig:ppcpa}}
\end{figure}

For $p\ge 12$, let $\Pp\subset \Bnm$ defined as follows : $\Pp = ([10]
\times \{12\}) \cup ([11] \times \{11\} ) \cup ([p] \times [10])$. We
define the \emph{input vertices} of $\Pp$ as $[10]\times\{12\}$ and
the \emph{output vertices} of $\Pp$ as $\{p\} \times [10]$. The set
$\Pp$, illustrated for $p=19$ in Figure~\ref{fig:ppcpa}, corresponds
to the set of black and gray vertices. The input vertices are the gray
circles, and the output vertices are the gray squares. Recall that in
our drawing conventions, the vertex $(1,1)$ is the bottom-left vertex
and hence the vertex $(i,j)$ is in the $i^{th}$ column from the left
and in the $j^{th}$ row from the bottom.

\begin{figure}
  \centering
  \psset{unit=0.28cm}
  \begin{minipage}{0.9\linewidth}
    \centering \begin{pspicture}(-0.5,-0.5)(30.5,40.5)
\psline[linecolor=black,linewidth=0.025](0,0)(30,0)(30,40)(0,40)(0,0)
\psline[linecolor=black,linewidth=0.025,linestyle=dashed](12.50,10.50)(11.50,10.50)(11.50,11.50)(10.50,11.50)(10.50,27.50)
\psline[linecolor=black,linewidth=0.025,linestyle=dashed](0.00,12.5)(10.5,12.5)
\rput(10,5){\hbox{\Large $P_{n-12}$}}
\rput(20,35){\hbox{\Large $R_{n-12}$}}
\rput(6,25){\hbox{\Large $Q_{m-12}$}}
\rput(25,15){\hbox{\Large $O_{m-12}$}}
{\rput*[0,0]{90}(16,15){\hbox{\Large $V(G_{n,m}) \setminus B_{n,m}$}}}
\psline[linecolor=black,linewidth=0.025,linestyle=dashed](19.50,12.50)(19.50,11.50)(18.50,11.50)(18.50,10.50)(12.50,10.50)
\psline[linecolor=black,linewidth=0.025,linestyle=dashed](17.50,0.0)(17.5,10.5)
\psline[linecolor=black,linewidth=0.025,linestyle=dashed](17.50,29.50)(18.50,29.50)(18.50,28.50)(19.50,28.50)(19.50,12.50)
\psline[linecolor=black,linewidth=0.025,linestyle=dashed](19.50,27.5)(30,27.5)
\psline[linecolor=black,linewidth=0.025,linestyle=dashed](10.50,27.50)(10.50,28.50)(11.50,28.50)(11.50,29.50)(17.50,29.50)
\psline[linecolor=black,linewidth=0.025,linestyle=dashed](12.5,29.5)(12.5,40)
\end{pspicture}
  \end{minipage}
  \caption{The sets $O_{m-12}$, $P_{n-12}$, $Q_{m-12}$ and $R_{n-12}$.
    \label{fig:pqrs}}
\end{figure}
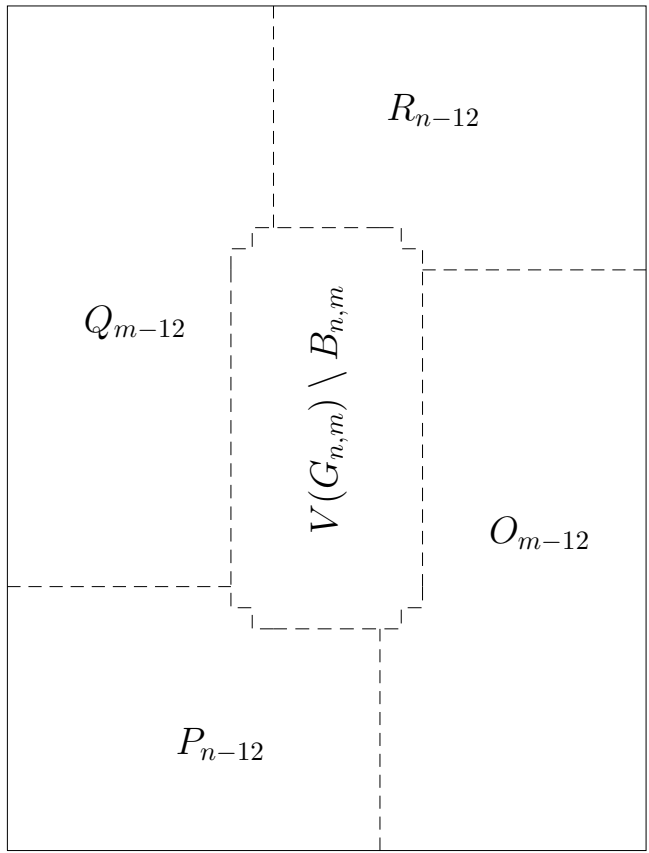

For $n,m\in \mathbb{N}^*$, let $\fnm: [n]\times[m] \to [m]\times[n]$ be
the bijection such that $\fnm(i,j) = (j,n-i+1)$. It is clear that the
set $\Bnm$ is the disjoint union of the following four sets depicted
in Figure~\ref{fig:pqrs}: $\fP{n-12}$, $\fQ{m-12}=\fnm(\fP{m-12})$,
$\fR{n-12} = \fmn \circ \fnm (\fP{n-12})$ and $\fS{m-12}= \fnm^{-1}(\fP{m-12})$. Similarly to $\fP{n-12}$, the sets
$O_{m-12}$, $Q_{n-12}$ and $R_{m-12}$ have input and output
vertices. For instance, the output vertices of $Q_{m-12}$ correspond
in Figure~\ref{fig:ppcpa} to the white squares. Every set playing a
symmetric role, we now focus on $P_{n-12}$.

\bigskip

Given a subset $S$ of $\Vnm$, let the labelling $\phi_S : \Vnm \to
\{0,1,2\}$ be such that 

$$\phi_S(i,j) = \left\{ 
  \begin{array}{l}
    0 \quad \mbox{if $(i,j)\in S$} \\
    1 \quad \mbox{if $(i,j)\in N[S]\setminus S$} \\
    2 \quad \mbox{otherwise}
  \end{array}\right.$$

Note that $\phi_S$ is such that any two
adjacent vertices in $\Gnm$ cannot be labelled~$0$ and~$2$.

Given $p\ge 12$ and a set $S\subseteq \Pp$, the \emph{input word} 
(resp. \emph{output word}) of $S$ for $P_p$, denoted by $\win(S)$
(resp. $\wout_p(S)$), is the ten letters word on the alphabet
$\{0,1,2\}$ obtained by reading $\phi_{S}$ on the input vertices 
(resp. output vertices) of $P_p$.
More precisely, its $i^{th}$ letter is $\phi_{S}(i,12)$ (resp. $\phi_{S}(p,i)$). 
Similarly, $O_p$, $Q_p$ and $R_p$ have also input and output words. 
For example, the output word of $S\subseteq O_p$ for $O_p$ is $\wout_p(\fnm(S))$.

According to the definition of $\phi$, the input and output words
belong to the set $\mW$ of ten letters words on $\{0,1,2\}$ which
avoid $02$ and $20$. The number of $k$-digits trinary numbers without
$02$ or $20$ is given by the following formula~\cite{Fisher}:
\begin{equation}
  \frac{(1+\sqrt{2})^{k+1}+(1-\sqrt{2})^{k+1}}{2}\label{eqn:trinary_words}
\end{equation}
The size of $\mW$ is therefore $|\mW| = 8119$.

Given two words $w,w' \in \mW$, we define
$\Dwwp$ as the family of subsets $D$ of $\fP{p}$ such that:
\begin{itemize}
\item $D$ dominates $\Ppp$,
\item $w$ is the input word $\win(D)$,
\item $w'$ is the output word $\wout_p(D)$.
\end{itemize}

A relevant information for our calculation will be to know, for two
given words $w,w'\in \mW$, the minimum loss over all losses $\los{D}$
where $D\in \Dwwp$.  For this purpose, we introduce the $8119\times
8119$ square matrix $C_p$. For $w,w'\in \mW$, let
$\matS_p[w,w']=\min_{D\in\Dwwp} \los{D}$ where the minimum of the
empty set is $+\infty$.

Let $w,w'\in \mW$ be two given words. Due to the symmetry of $P_{12}$
with respect to the first diagonal (bottom-left to top-right) of the
grid, if a vertex set $D$ belongs to $\Dww{12}$, then $D' = \{(j,i) |
(i,j)\in D\}$ belongs to $\mD^{w',w}_{12}$. Moreover, it is clear
that, always due to the symmetry, $\los{D}=\los{D'}$. Therefore, we
have $\matS_{12}[w,w'] = \matS_{12}[w',w]$ and thus $\matS_{12}$ is a
symmetric matrix.  Despite the size of $\matS_{12}$ and the size of
$P_{12}$ (141 vertices), it is possible to compute $\matS_{12}$ in
less than one hour by computer. Indeed, we choose a sequence of subsets
$X_0=\emptyset, X_1, \ldots, X_{141}=P_{12}$ such that for every $i\in
\{1,\ldots, 141\}$, $X_i\subseteq X_{i+1}$ and $X_{i+1} \setminus X_i
= \{ x_{i+1} \}$.  Moreover, we choose the sequence such that for
every $i$, $X_i \setminus I(X_i)$ is at most $21$. This can be done
for example by taking $x_{i+1}= \min \{ (x,y) : (x,y)\in P_{12}
\setminus X_i \}$, where the order is the lexical order. For $i\in
\{0,\ldots, 141\}$, we compute for every labeling $f \in \mF_i$, where
$\mF_i$ is the set of functions $(X_i\setminus I(X_i)) \to \{0,1,2\}$,
the minimal loss $\lif$ of a set $S\subseteq X_i$ which dominates
$I(X_i)$ and such that $\phi_S(v)=f(v)$ for every $v\in X_i \setminus
I(X_i)$. Note that not every labeling is possible since two adjacent
vertices cannot be labeled~$0$ and~$2$. The number of possible
labellings can be computed using formula~(\ref{eqn:trinary_words}),
and since $|X_i \setminus I(X_i)|$ can be covered by a path of at most
$23$ vertices, this gives, in the worst case, that this number is less
than $10^9$ and can be then processed
by a computer. We compute inductively the sequence $(l_{i,f})_{f\in
  \mF_i}$ from the sequence $(l_{i-1,f})_{f\in \mF_{i-1}}$ by dynamical
programming, and $\matS$ is easily deduced from $(l_{141,f})_{f\in
  \mF_{141}}$.

\bigskip

In the following, our aim is to glue $P_{n-12},
Q_{m-12},R_{n-12},$ and $O_{m-12}$ together. For two consecutive parts
of the border, say $P_{n-12}$ and $Q_{m-12}$, the output word of
$Q_{m-12}$ should be compatible with the input word of $P_{n-12}$.
Two words $w,w'$ of $\mW$ are \emph{compatible} if the sum of their
corresponding letters is at most 2, i.e. $w[i]+w'[i]\leq 2$ for all
$i\in [9]$. Note that $w[10] + w'[10]$ should be greater than 2 since
the corresponding vertices can be dominated by some vertices of $\Vnm
\setminus \Bnm$.

Given two words $w,w'\in \mW$, let
$\ell(w,w')= \vert \{ i\in [10] : w[i]\ne 2 \text{ and } w'[i]=0 \}
\vert + \vert \{ i\in [10] : w'[i]\ne 2 \text{ and } w[i]=0 \} \vert$.

\begin{lemma} \label{lem:compatible}Let $D$ be a dominating set of $\Gnm$ and let us denote
  $D_P = D\cap P_{n-12}$ and $D_Q = D\cap Q_{m-12}$. Then $\ell(D\cap
  (P_{n-12} \cup Q_{m-12}))= \ell(D_P) + \ell(D_Q) + \ell(w,w')$, where
  $w=\win(D_P)$ and $w'=\wout_q(\fnm^{-1}(D_Q))$. Moreover, $w$ and $w'$
  are compatible.
\end{lemma}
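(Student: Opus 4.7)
The plan is to apply Proposition~\ref{props}(\ref{props:2}). Since the four pieces $\fP{n-12},\fQ{m-12},\fR{n-12},\fS{m-12}$ partition $\Bnm$, the sets $D_P$ and $D_Q$ are disjoint, so
\[
\los{D_P\cup D_Q}=\los{D_P}+\los{D_Q}+|N[D_P]\cap N[D_Q]|,
\]
and the equality in the statement reduces to proving $|N[D_P]\cap N[D_Q]|=\ell(w,w')$.

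My next step is to localise where $N[D_P]$ and $N[D_Q]$ can overlap. Inspecting the explicit coordinates of $\fP{n-12}$ and $\fQ{m-12}$ and using $n\ge 24$ (hence $m\ge 24$), one checks that the only edges of $\Gnm$ with one endpoint in $\fP{n-12}$ and the other in $\fQ{m-12}$ are the ten vertical edges joining $(i,12)$ to $(i,13)$ for $i\in[10]$, and that no vertex of $\Vnm\setminus(\fP{n-12}\cup\fQ{m-12})$ has neighbours in both pieces --- the only candidates are the interior vertices $(11,12),(12,11),(12,12),(11,13)$, each adjacent to at most one of the two pieces. Hence $N[D_P]\cap N[D_Q]\subseteq\bigcup_{i\in[10]}\{(i,12),(i,13)\}$. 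A brief case split on $w[i]=\phi_{D_P}(i,12)$ and $w'[i]=\phi_{D_Q}(i,13)$ then gives $(i,12)\in N[D_P]\cap N[D_Q]$ iff $w[i]\ne 2$ and $w'[i]=0$ (its only neighbour in $\fQ{m-12}$, namely $(i,13)$, must lie in $D_Q$), and symmetrically $(i,13)\in N[D_P]\cap N[D_Q]$ iff $w[i]=0$ and $w'[i]\ne 2$; summing over $i\in[10]$ reproduces $\ell(w,w')$.

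For the compatibility claim, the key point is that when $i\in[9]$ every neighbour of $(i,12)$ and of $(i,13)$ already lies in $\fP{n-12}\cup\fQ{m-12}$: the horizontal neighbours stay inside $[10]\times\{12\}$ or $[10]\times\{13\}$, $(i,11)\in\fP{n-12}$, $(i,14)\in\fQ{m-12}$, and the vertical cross-edge merely swaps the two pieces. Because $D$ dominates $\Gnm$, both $(i,12)$ and $(i,13)$ must then be dominated by $D_P\cup D_Q$, which translates into the two conditions $w[i]\ne 2$ or $w'[i]=0$, and $w'[i]\ne 2$ or $w[i]=0$. The only pairs $(w[i],w'[i])$ violating at least one of these are $(1,2),(2,1),(2,2)$, i.e.\ exactly those with $w[i]+w'[i]>2$, so compatibility holds. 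The restriction to $i\in[9]$ is essential: for $i=10$ the vertices $(10,12)$ and $(10,13)$ have the extra neighbours $(11,12),(11,13)\in\Vnm\setminus\Bnm$, which can dominate them from the interior and leave $w[10]$ and $w'[10]$ unconstrained. The only genuinely delicate step is the geometric bookkeeping in the second paragraph, where the hypothesis $n\ge 24$ is needed to keep the ``step'' vertices of $\fQ{m-12}$ (which sit at rows $\ge m-10\ge 14$) safely away from $\fP{n-12}$ and to rule out all other interior candidates.
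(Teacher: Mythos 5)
Your proof is correct and follows essentially the same route as the paper: Proposition~\ref{props}(\ref{props:2}) reduces the identity to $|N[D_P]\cap N[D_Q]|=\ell(w,w')$, and compatibility is forced because for $i\in[9]$ the interface vertices $(i,12)$ and $(i,13)$ can only be dominated from within $P_{n-12}\cup Q_{m-12}$. The only difference is that you spell out the geometric bookkeeping identifying $N[D_P]\cap N[D_Q]$ with the ten cross-edges, a step the paper simply asserts.
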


\begin{proof}
  By Proposition \ref{props}(\ref{props:2}), $\ell(D\cap (P_{n-12} \cup Q_{m-12}))=
  \ell(D_P) + \ell(D_Q) + \vert N[D_P] \cap N[D_Q]\vert$. It suffice then to note that
  $\ell(w,w')= \vert N[D_P] \cap N[D_Q]\vert$ to get $\ell(D\cap (P_{n-12} \cup Q_{m-12}))=
  \ell(D_P) + \ell(D_Q) + \ell(w,w')$.
  
  In what remains, we prove that $w$ and $w'$ are compatible.  If
  those two words were not compatible, there would exist an index $i\in[9]$
  such that $\wout_{m-12}(\fnm^{-1}(D_Q))[i] + \win(D_P)[i] > 2$.  Thus
  at least one of these two letters should be a 2, and the other one
  should not be 0.
  
  Suppose that $\wout_{m-12}(\fnm^{-1}(D_Q))[i] = 2$ and note that
  this means that the vertex $(i,13)$ is not dominated by a vertex in
  $D_Q$. Since $D$ is a dominating set of $\Gnm$, every output vertex of
  $Q_{m-12}$ except $(10,13)$ (and every input vertex of $P_{n-12}$
  except $(10,12)$) is dominated by a vertex of $D_Q$ or by a vertex of
  $D_P$. Thus $(i,13)$ should be dominated by its unique neighbour in
  $P_{n-12}$, $(i,12)$. This would imply that $(i,12)\in D$
  contradicting the fact that $\win(D_P)[i] \neq 0$.

  Similarly, if $\win(D_P)[i] =2$, the vertex $(i,12)$ 
  is not dominated by a vertex in $D_P$, thus $(i,12)$ must be dominated 
  by the vertex $(i,13)\in D$, contradicting the fact that
  $\wout_{m-12}(\fnm^{-1}(D_Q))[i] \neq 0$.
\end{proof}

Lemma~\ref{lem:compatible} is designed for the two consecutive parts
$P_{n-12}$ and $Q_{m-12}$ of the border of $\Gnm$. Its easy to see
that this extends to any pair of consecutive parts of the
border, i.e. $Q_{m-12}$ and $R_{n-12}$, $R_{n-12}$ and $O_{m-12}$,
$O_{m-12}$ and $P_{n-12}$.

\bigskip

We define the matrix $8119\times 8119$ square matrix $\matL$ which
contains, for every pair of words $w,w'\in \mW$, the value $\ell(w,w')$:   
$$\matL[w,w']=
\begin{cases} 
  +\infty \text{ if $w$ and $w'$ are not compatible,}\\
  \ell(w,w') \text{ otherwise.}
\end{cases}$$
Note that $\matL$ is symmetric since $\ell(w,w') = \ell(w',w)$.

Let $\otimes$ be the matrix multiplication in $(\min,+)$ algebra, 
i.e. $C = A \otimes B$ is the matrix where for all $i,j$, $C[i,j] =
\min_k A[i,k] + B[k,j]$. 

Let $\matAS_p=\matL \otimes \matS_p$ for $p\ge 12$. 

By construction, $\matAS_{n-12}[w,w']$ corresponds to the minimum
possible loss $\los{D\cap P_{n-12}}$ of a dominating set $D\subseteq
\Vnm$ that dominates $I(P_{n-12})$ and such that $w$ is the output
word of $Q_{m-12}$ and $w'$ is the output word of $P_{n-12}$.

\begin{lemma}
  \label{lem:bnm}
  For all $24 \le n \le m$, we have 
  $$b_{n,m}\geq \min_{w_1,w_2,w_3,w_4\in \mW} \matAS_{n-12}[w_1,w_2] +
  \matAS_{m-12}[w_2,w_3] + \matAS_{n-12}[w_3,w_4]+ \matAS_{m-12}[w_4,w_1].$$
\end{lemma}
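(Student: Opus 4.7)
The plan is to take a set $D\subseteq \Bnm$ that attains $\bnm$, split it into its four restrictions $D_P=D\cap P_{n-12}$, $D_O=D\cap O_{m-12}$, $D_R=D\cap R_{n-12}$, $D_Q=D\cap Q_{m-12}$, and then decompose the loss $\ell(D)$ into contributions from each piece plus junction terms, each of which can be absorbed into one of the $\matAS$ entries.

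First I would check that each restriction dominates the corresponding internal set: if $v\in I(P_{n-12})$ then $N[v]\subseteq P_{n-12}$, so any vertex of $D$ dominating $v$ must lie in $D_P$, and since $I(P_{n-12})\subseteq I(\Bnm)$ the set $D$ does dominate $v$; the same argument works for $O,Q,R$. Applying Proposition~\ref{props}(\ref{props:2}) iteratively to the disjoint union $D=D_P\sqcup D_O\sqcup D_R\sqcup D_Q$ gives
\begin{equation*}
\ell(D)=\sum_{X}\ell(D_X)+\sum_{X\neq Y}\bigl|N[D_X]\cap N[D_Y]\bigr|,
\end{equation*}
where the second sum is over unordered pairs. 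Because $n\geq 24$, the pieces $P_{n-12}$ and $R_{n-12}$ are separated by more than two rows, and similarly for $Q_{m-12}$ and $O_{m-12}$; hence $N[D_P]\cap N[D_R]=\emptyset$ and $N[D_Q]\cap N[D_O]=\emptyset$, and consequently all triple intersections vanish (the two remaining nonempty pairwise overlaps live near opposite corners). Only the four consecutive-pair overlaps survive, and by the symmetric versions of Lemma~\ref{lem:compatible} each of them equals $\ell(w,w')$ for the appropriate output word of one piece and input word of the next, with the two words automatically compatible.

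Let $w_1,w_2,w_3,w_4$ be the output words of $D_Q,D_P,D_O,D_R$ respectively, so that consecutive pieces in the cyclic order $P\to O\to R\to Q\to P$ share one of these words across their common junction. For the $P$-piece, the definition $\matAS_{n-12}=\matL\otimes \matS_{n-12}$ and the minimality of the $\otimes$-product yield
\begin{equation*}
\ell(D_P)+\ell\bigl(w_1,\win(D_P)\bigr)\geq \matS_{n-12}\bigl[\win(D_P),w_2\bigr]+\matL\bigl[w_1,\win(D_P)\bigr]\geq \matAS_{n-12}[w_1,w_2],
\end{equation*}
and the same argument applied cyclically to $O,R,Q$ produces the remaining three terms with the correct indices $m-12$ or $n-12$. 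Summing the four inequalities recovers exactly the decomposition of $\ell(D)$ above and gives $\bnm=\ell(D)\geq \matAS_{n-12}[w_1,w_2]+\matAS_{m-12}[w_2,w_3]+\matAS_{n-12}[w_3,w_4]+\matAS_{m-12}[w_4,w_1]$, which is at least the stated minimum over $w_1,w_2,w_3,w_4\in\mW$.

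The only real obstacle is bookkeeping: one must be careful that the cyclic labelling of the $w_i$'s matches the alternation of $\matAS_{n-12}$ and $\matAS_{m-12}$ in the statement, and that the ``input word'' argument that gets absorbed by the $\min$ inside $\matL\otimes\matS_p$ is indeed the one corresponding to the junction used in the previous term. Everything else — the super-additive decomposition, the disjointness of non-adjacent neighbourhoods for $n\geq 24$, and the identification of overlap sizes with $\ell(w,w')$ — follows directly from Proposition~\ref{props} and the (extension of the) already-proved Lemma~\ref{lem:compatible}.
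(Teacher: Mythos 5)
Your argument is correct and takes essentially the same route as the paper: split an optimal $D\subseteq\Bnm$ into the four border pieces, use Proposition~\ref{props}(\ref{props:2}) together with (the extension of) Lemma~\ref{lem:compatible} to write $\ell(D)$ as the four piece losses plus the four corner terms $\ell(w,w')$, lower-bound each piece loss by the corresponding $\matS$ entry, and absorb each corner term into the $(\min,+)$ product defining $\matAS$ — and you even make explicit two points the paper leaves implicit (that each restriction dominates the internal part of its piece, and that the triple intersections vanish). The only blemish is the justification of $N[D_P]\cap N[D_R]=\emptyset$ by ``separated by more than two rows'': for $m=24$ or $25$ the extreme rows $12$ and $m-11$ of the two pieces are only $1$ or $2$ apart, and disjointness there actually comes from the column offsets near the corners, but the claim itself is true (and is asserted without proof in the paper as well).
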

\begin{proof} Consider a set $D \subseteq \Bnm$ which dominates
  $\Bpnm$ and achieving $\ell (D)=b_{n,m}$.  Let $D_P= D\cap P_{n-12}$,
  $D_Q= D\cap Q_{m-12}$, $D_R= D\cap R_{n-12}$ and $D_O= D\cap
  O_{m-12}$. Let $w_P$ ($w_Q$, $w_R$ and $w_O$, respectively) be the
  input word of $P_{n-12}$ ($Q_{m-12}$, $R_{n-12}$ and $O_{m-12}$), and
  $w'_P$ ($w'_Q$, $w'_R$ and $w'_O$) be the output word of $P_{n-12}$
  ($Q_{m-12}$, $R_{n-12}$ and $O_{m-12}$).  By definition of $\matS_p$,
  the loss of $D_P$ is at least $\matS_{n-12}[w_P,w'_P]$. Similarly, we
  have $\ell(D_Q)\geq \matS_{m-12}[w_Q,w'_Q]$, $\ell(D_R)\geq
  \matS_{n-12}[w_R,w'_R]$ and $\ell(D_O)\geq \matS_{m-12}[w_O,w'_O]$.
  By the definition of the loss:
  \begin{align} \ell(D) = {} & b_{n,m} \nonumber\\ = {} & 5\times
    \vert D \vert - \vert N[D]\vert \nonumber\\ = {} & \ell(D_O)
    +\ell(D_P) +\ell(D_Q) +\ell(D_R) + \matL[w'_O,w_P] + \matL[w'_P,w_Q] +
    \matL[w'_Q,w_R] + \matL[w'_R,w_O] \nonumber\\ & \text{~~ by
      Lemma~\ref{lem:compatible} and since $N[D_P] \cap N[D_R] = N[D_Q] \cap
      N[D_O] =\emptyset$} \nonumber\\ \ge {} & \matS_{m-12}[w_O,w'_O] +
    \matS_{n-12}[w_P,w'_P] + \matS_{m-12}[w_Q,w'_Q] +
    \matS_{n-12}[w_R,w'_R] \nonumber\\ & + \matL[w'_O,w_P] +
    \matL[w'_P,w_Q] + \matL[w'_Q,w_R] + \matL[w'_R,w_O] \nonumber\\ \ge {}
    & \matAS_{m-12}[w_O,w_P] + \matAS_{n-12}[w_P,w_Q] +
    \matAS_{m-12}[w_Q,w_R] + \matAS_{n-12}[w_R,w_O] \nonumber \\
    & \text{~~ since $w'_O$ and $w_P$ (resp. $w'_P$ and $w_Q$, $w'_Q$
      and $w_R$, $w'_R$ and $w_O$) are compatibles.} \nonumber
  \end{align}
\end{proof}

\medskip

According to Lemma~\ref{lem:bnm}, to bound $b_{n,m}$ it would be thus
interesting to know $\matAS_p$ for $p> 12$. It is why we introduce the
following $8119\times 8119$ square matrix, $\matT$.

\begin{lemma}
  There exists a matrix $\matT$ such that $\matS_{p+1}=\matS_p \otimes \matT$
  for all $p\ge 12$. This matrix is defined as follows:
  $$ \matT[w,w']=
  \begin{cases} 
    +\infty \ \ \ \text{ if $\exists i\in [10]$ s.t. $w[i]=0$ and $w'[i]=2$}\\ 
    +\infty \ \ \ \text{ if $\exists i\in [9]$ s.t. $w[i]=2$ and $w'[i]\ne 0$}\\ 
    +\infty \ \ \ \text{ if $\exists i\in \{2,\ldots, 9\}$
      s.t. $w'[i]=1$, $w[i]\ne 0$, $w'[i-1]\ne 0$ and $w'[i+1]\ne
      0$}\\  
    +\infty \ \ \ \text{ if $w'[1]=1$, $w[1]\ne 0$ and $w'[2]\ne 0$}\\ 
    +\infty \ \ \ \text{ if $w'[10]=1$, $w[10]\ne 0$ and $w'[9]\ne 0$}\\ 
    3 \times \vert w'\vert_0 - \vert w\vert_2 - \vert w' \vert_1 +
    \vert w\vert_0 - 1 \ \ \
    \text{if $w'[10] = 0$} \\
    3 \times \vert w'\vert_0 - \vert w\vert_2 - \vert w' \vert_1 +
    \vert w\vert_0 \ \ \ \text{ otherwise.}
  \end{cases}
  $$
\end{lemma}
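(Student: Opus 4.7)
The plan is to verify the matrix identity entrywise: for each pair $(w_{\mathrm{in}}, w')$, show that $\matS_{p+1}[w_{\mathrm{in}}, w'] = \min_{w''}\bigl(\matS_p[w_{\mathrm{in}}, w''] + \matT[w'', w']\bigr)$. I would fix $w_{\mathrm{in}}, w'$ and analyse any $D \in \mD^{w_{\mathrm{in}},w'}_{p+1}$ by splitting it as $D_p = D\cap \fP{p}$ and $D_{\mathrm{new}} = D \cap (\{p+1\}\times[10])$. The piece $D_{\mathrm{new}}$ is completely pinned down by $w'$ via $D_{\mathrm{new}} = \{(p+1,j): w'[j]=0\}$, while $D_p$ has its own output word $w''$ on column $p$. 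Since $N[D_{\mathrm{new}}]\cap \fP{p} \subseteq \{p\}\times[10]$ and no vertex of column $p$ lies in $I(\fP{p})$ for $p\ge 12$, the new piece cannot help dominate $I(\fP{p})$; hence $D_p$ alone dominates $I(\fP{p})$ with input $w_{\mathrm{in}}$ and output $w''$, giving $\los{D_p} \ge \matS_p[w_{\mathrm{in}}, w'']$. Conversely, starting from an optimal $D_p$ and adjoining $D_{\mathrm{new}}$ yields a valid element of $\mD^{w_{\mathrm{in}}, w'}_{p+1}$ whenever the feasibility conditions of $\matT[w'', w']$ hold.

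Next I would verify the five $+\infty$ clauses as obstructions to extending $D_p$ consistently. The first clause (no $(0,2)$ pair between $w$ and $w'$) is the no-adjacent-$0$-$2$ rule on $\phi_D$ applied to the adjacent vertices $(p,i)$ and $(p+1,i)$. The second clause enforces the domination of the new interior vertices $I(\fP{p+1})\setminus I(\fP{p}) = \{p\}\times [9]$: if $w[i]=2$ for $i\in[9]$, then $(p,i)\notin N[D_p]$, so its only candidate dominator in $D_{\mathrm{new}}$ is $(p+1,i)$, forcing $w'[i]=0$. A key sanity check here is that $(p,1)\in I(\fP{p+1})$ (because $(p,0)\notin V(\Gnm)$, so the closed neighbourhood in the grid is automatically inside $\fP{p+1}$), whereas $(p,10)\notin I(\fP{p+1})$ (since $(p,11)\notin \fP{p+1}$ for $p\ge 12$), which is why the clause covers exactly $i\in[9]$. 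The remaining three clauses encode the requirement that any $(p+1,i)$ with $w'[i]=1$ must have a neighbour in $D$, and inside $\fP{p+1}$ the only candidates are $(p,i), (p+1,i-1), (p+1,i+1)$ (with truncation at $i\in\{1,10\}$), so at least one of $w[i], w'[i-1], w'[i+1]$ must be $0$. One then exhibits for every pair $(w,w')$ avoiding all five clauses an explicit extension $D_p\cup D_{\mathrm{new}}$ that is a legal dominating set of $I(\fP{p+1})$.

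For the finite value I would compute
\[
  \los{D} - \los{D_p} = 5|D_{\mathrm{new}}| - \bigl|N[D_{\mathrm{new}}]\setminus N[D_p]\bigr|,
\]
column by column. Column $p+2$ always contributes $|w'|_0$ fresh vertices, since $N[D_p]$ does not reach it. Column $p$ contributes one newly covered vertex $(p,j)$ for each $j$ with $w[j]=2$ and $w'[j]=0$, and the second $+\infty$ clause collapses this count into a clean expression involving $|w|_2$. Column $p+1$ is the subtle case: $(p+1,j)\in N[D_{\mathrm{new}}]$ iff one of $w'[j-1],w'[j],w'[j+1]$ is $0$, and then is newly covered iff $w[j]\ne 0$; a per-position analysis over the nine feasible local configurations of $(w[j],w'[j])$, using the third/fourth/fifth clauses, collapses the total contribution to a combination of $|w'|_0$, $|w'|_1$ and $|w|_0$. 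The single extra vertex $(p+1,11)\in \Gnm\setminus \fP{p+1}$ is freshly covered exactly when $(p+1,10)\in D_{\mathrm{new}}$, which produces the correction $-[w'[10]=0]$. Summing the three column contributions and the boundary correction gives the closed-form $3|w'|_0 - |w|_2 - |w'|_1 + |w|_0 - [w'[10]=0]$.

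The main obstacle is precisely this last case analysis in column $p+1$: one must check that the nine local contributions telescope into the closed form, and that the corner positions $j\in\{1,10\}$ (where the set of candidate neighbours is truncated and where the feasibility clauses take a special form) do not introduce any residual boundary terms beyond the single correction $-[w'[10]=0]$. Once that telescoping is verified, the lower bound $\los{D} \ge \matS_p[w_{\mathrm{in}}, w'']+\matT[w'', w']$ and the matching upper bound (from the explicit extension of an optimal $D_p$) together give $\matS_{p+1} = \matS_p \otimes \matT$.
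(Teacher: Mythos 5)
Your proposal is essentially the paper's own proof: you split a set $D\in \mD^{w,w'}_{p+1}$ into its restriction to $P_p$ (which must dominate $I(P_p)$ and has some output word on column $p$) plus the column $p+1$ vertices determined by $w'$, compute the loss increment column by column (including the $(p+1,11)$ term that produces the $-1$ when $w'[10]=0$), encode realizability of the extension by exactly the five $+\infty$ clauses, and obtain $\matS_{p+1}=\matS_p\otimes\matT$ from the two matching inequalities. The only terse spot is the column-$p$ count, where the second clause forces $w'[i]=0$ only for $i\in[9]$ so the position $i=10$ needs a word of justification, but this is precisely the same step (and the same level of detail) as in the paper's argument.
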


\begin{proof} Consider a set $S'\subseteq P_{p+1}$ dominating
  $I(P_{p+1})$ and let $S =S' \cap P_p$.  Let $w = \wout_p(S)$ and $w' =
  \wout_{p+1}(S')$.  Let $\Delta(S,S')= \los{S'} - \los{S}$. By
  definition of the loss, $\Delta(S,S')= 5 \times |S'\setminus S| -
  |N[S']\setminus N[S]|$. Let us compute $\Delta(S,S')$ in term of the
  number of occurrences of $0$'s, $1$'s and $2$'s in the words $w$ and
  $w'$. The set $S'\setminus S$ corresponds to the vertices $\{(p+1,i)
  \mid i\in[10], w'[i] = 0\}$. The set $N[S']\setminus N[S]$ corresponds
  to the vertices dominated by $S'$ but not dominated by $S$; these
  vertices clearly belong to the columns $p$, $p+1$ and $p+2$. Since
  $S'$ dominates $I(P_{p+1})$, those in the column $p$ are the vertices
  $\{(p,i) \mid i\in[10], w[i]=2\}$.  Those in the column $p+1$ are the
  vertices $\{(p+1,i) \mid i\in[10], w'[i]\neq 2, w[i] \neq0\}$ and
  possibly the vertex $(p+1,11)$ when $w'[10]=0$. Finally, those in the
  column $p+2$ are the vertices $\{(p+2,i) \mid i\in[10], w'[i]=0\}$.
  We then get:
  $$ 
  \Delta(S,S')= 
  \begin{cases}
    3 \times \vert w'\vert_0 - \vert w \vert_2  
    - \vert w'\vert_1 + \vert w\vert_0 - 1 & \text{if $w'[10] = 0$} \\    
    3 \times \vert w'\vert_0 - \vert w \vert_2  
    - \vert w'\vert_1 + \vert w\vert_0 & \text{otherwise}
  \end{cases}
  $$
  where $|w|_n$ denotes the number of occurrences of the letter $n$ in
  the word $w$.

  Thus $\Delta(S,S')$ only depends on the output words of $S$ and $S'$,
  and we can denote this value by $\Delta(w,w')$. Note however that
  there exist pairs of
  words $(w,w')$ that could not be the output words of $S$ and $S'$;
  there are three cases:
  \begin{enumerate}[C{a}se 1.]
  \item $w[i]=0$ and $w'[i]=2$ since the vertex
    $(p+1,i)$ would be dominated by $(p,i)$ contradicting its label 2;
  \item $w[i]=2$ and $w'[i]\ne 0$ for $i\in[9]$ since 
    $(p,i)$ would not be dominated contradict the fact that $S'$
    dominates $I(P_{p+1})$;
  \item $w'[i]=1$ and $w'[i-1] \neq 0$, $w'[i+1]\neq 0$, $w[i]\neq 0$
    since $(p+1,i)$ would be dominated according to its label but none
    of its neighbors belong to $S'$. 
  \end{enumerate}
  For these forbidden cases, we set $\Delta(w,w') = +\infty$.
  
  By definition, $\matS_{p+1}[w_i,w']$ is the minimum loss $\ell(S')$
  of a set $S'\subseteq P_{p+1}$ that dominates $I(P_{p+1})$, with $w_i$
  as input word and $w'$ as output word. It is clear that $S= S'\cap P_p$
  dominates $I(P_{p})$ and has $w_i$ as input word.  Let $w$ be its output
  word and note that $\matS_{p+1}[w_i,w'] = \ell(S') = \ell(S) +
  \Delta(w_i,w')$.  The minimality of $\ell(S')$ implies the minimality of
  $\ell(S)$ over the sets $X\in \mD^{w_i,w'}_p$. Indeed, any set $X\in
  \mD^{w_i,w'}_p$ could be turned in a set of $X'\in \mD^{w_i,w'}_{p+1}$ by
  adding vertices of the $p+1^{th}$ column accordingly to $w'$.  Thus
  $$
  \matS_{p+1}[w_i,w'] = \matS_{p}[w_i,w] + \Delta(w,w')
  $$

  which implies that
  $$
  \matS_{p+1}[w_i,w'] \geq \min_{w} \matS_{p}[w_i,w] + \Delta(w,w').
  $$

  On the other hand, for every word $w_o \in \mW$ such that
  $\matS_{p}[w_i,w_o]\neq +\infty$ and $\Delta(w_o,w')\neq
  +\infty$, there is a set $S \in \mD^{w_i,w_o}_p$, with $\ell(S) =
  \matS_{p}[w_i,w_o]$, that can be turned in a set $S' \in
  D^{w_i,w'}_{p+1}$ with $\ell(S') = \matS_{p}[w_i,w_o] +
  \Delta(w_o,w')$. Thus
  $$
  \matS_{p+1}[w_i,w'] \leq \min_{w_o} \matS_{p}[w_i,w_o] + \Delta(w_o,w').
  $$

  This concludes the proof of the lemma.
\end{proof}

%%T: 2919825 / 65918161

By the definition of $M_p$, we have also $\matAS_{p+1}=\matAS_p
\otimes \matT$. Note that $\matT$ is a sparse matrix: about $95.5 \%$
of its $8119^2$ entries are $+\infty$.  Thus the multiplication by
$\matT$ in the $(\min,+)$ algebra can be done in a reasonable amount
of time by a trivial algorithm.

\medskip

\begin{fact}\label{fact126} The computations give us that
  $\matAS_{126}=\matAS_{125}+1$. Thus, since $(A + c)\otimes B =
  (A\otimes B) + c$ for any matrices $A$, $B$ and any integer $c$, we
  have that $\matAS_{125+k}=\matAS_{125}+k$ for every $k\in \mathbb{N}$.
\end{fact}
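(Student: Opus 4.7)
The plan is to split the proof of Fact~\ref{fact126} into two parts: a computational verification and a short inductive argument that promotes the single-step equality to an arbitrary-length shift.

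First, I would treat the identity $\matAS_{126} = \matAS_{125} + 1$ as the outcome of a direct computation. Given the sparse structure of $\matT$ (about $95.5\%$ of entries are $+\infty$) and the fact that $\matAS_p = \matS_p \otimes \matL^{\text{\textsc{op}}}$-style formulas reduce to repeated multiplications by $\matT$ in the $(\min,+)$ algebra, one iteratively computes $\matAS_{13}, \matAS_{14}, \ldots, \matAS_{126}$ starting from $\matS_{12}$ (whose explicit computation was described in the preceding paragraphs). At each step one checks entrywise whether $\matAS_{p+1} - \matAS_p$ is the constant matrix equal to $1$; by inspection this becomes the case exactly at $p = 125$.

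Second, I would establish the algebraic identity $(A + c) \otimes B = (A \otimes B) + c$ for any matrices $A, B$ over $\mathbb{N} \cup \{+\infty\}$ and any integer $c\ge 0$: expanding,
\[
\bigl((A+c)\otimes B\bigr)[i,j] = \min_k \bigl(A[i,k]+c+B[k,j]\bigr) = c + \min_k \bigl(A[i,k]+B[k,j]\bigr) = (A\otimes B)[i,j] + c,
\]
since the constant $c$ factors out of the minimum. With this identity in hand, the conclusion follows by induction on $k$. The base case $k=0$ is trivial and $k=1$ is the computed fact. For the induction step, assuming $\matAS_{125+k} = \matAS_{125} + k$, we compute
\[
\matAS_{125+k+1} = \matAS_{125+k}\otimes \matT = (\matAS_{125} + k)\otimes \matT = (\matAS_{125}\otimes \matT) + k = \matAS_{126} + k = \matAS_{125} + (k+1),
\]
using the recurrence $\matAS_{p+1} = \matAS_p \otimes \matT$ established in the previous lemma.

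The only genuine obstacle is the computational part: one must be confident that the $8119 \times 8119$ matrix arithmetic is carried out correctly, and that the equality $\matAS_{126} = \matAS_{125} + 1$ is checked entrywise (not merely in some summary statistic). The rest is a one-line algebraic observation plus a routine induction, and no further combinatorial insight on the grid is needed at this stage.
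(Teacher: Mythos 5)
Your proposal is correct and follows essentially the same route as the paper: an entrywise computational verification that $\matAS_{126}=\matAS_{125}+1$, the observation that a constant factors out of a $(\min,+)$ product, and an induction on $k$ using the recurrence $\matAS_{p+1}=\matAS_p\otimes\matT$, which is precisely the argument the paper states tersely. (Only a cosmetic remark: the paper defines $\matAS_p=\matL\otimes\matS_p$, i.e.\ multiplication by $\matL$ on the left, so your description of how $\matAS_p$ arises is slightly off, but this does not affect the argument.)
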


Let us define $\matMin_p= \min_{k\in \mathbb{N}} ( \matAS_{p+k} - k
)$.  Then, for all $q\ge p$, $\matAS_q \ge \matMin_p + (q-p)$.  By
Fact~\ref{fact126}, $\matMin_p= \min_{k\in \{0,\dots 125-p\}} (
\matAS_{p+k} - k )$

\begin{fact}\label{fact23} By computing $\matMin_{12}$, and $A'=
  \matMin_{12} \otimes \matMin_{12}$, we obtain that $\min_{w_1,w_3} (A'
  + A'^T)[w_1,w_3] = 76$ (where $A^T$ is the transpose of $A$).
\end{fact}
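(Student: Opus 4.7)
The plan is to verify the claimed value $76$ by direct computation on $8119\times 8119$ matrices in the $(\min,+)$ semiring; no combinatorics remains, only careful bookkeeping of the objects already defined. First I would construct $\matS_{12}$ by the dynamic-programming scheme sketched right after its definition, and the matrices $\matL$ and $\matT$ directly from their case analyses. Then, starting from $\matAS_{12} = \matL \otimes \matS_{12}$, I would iterate $\matAS_{p+1}=\matAS_p \otimes \matT$ for $p=12,13,\ldots,124$. By Fact~\ref{fact126} this is all that is needed, since from $p=125$ onward the matrix just shifts by $+1$ per step.

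Second, I would assemble $\matMin_{12}$ as the entrywise minimum of the $114$ matrices $\matAS_{12+k}-k$ for $k\in\{0,1,\ldots,113\}$; Fact~\ref{fact126} guarantees that this truncated minimum coincides with $\min_{k\in\mathbb{N}}(\matAS_{12+k}-k)$. Finally I would compute $A'=\matMin_{12}\otimes \matMin_{12}$, form $A'+A'^T$ entrywise, and take the global minimum, checking that it equals $76$. The downstream use is transparent from Lemma~\ref{lem:bnm}: the entry $(A'+A'^T)[w_1,w_3]$ equals $\min_{w_2,w_4}\bigl(\matMin_{12}[w_1,w_2]+\matMin_{12}[w_2,w_3]+\matMin_{12}[w_3,w_4]+\matMin_{12}[w_4,w_1]\bigr)$, which after the shift $\matAS_q\ge \matMin_{12}+(q-12)$ is exactly the quantity that lower-bounds the four-term cyclic sum appearing in Lemma~\ref{lem:bnm}.

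The main obstacle is purely computational. A naive $(\min,+)$ product of $8119\times 8119$ matrices costs $\Theta(8119^3)\approx 5\times 10^{11}$ operations, and we need over one hundred such iterations. Two features make this tractable: $\matT$ has roughly $95.5\%$ of its entries equal to $+\infty$, reducing $\matAS_p \otimes \matT$ to a sparse-by-dense product; and the compatibility constraints baked into $\matL$ keep many entries of $\matAS_p$ equal to $+\infty$ as well. An implementation of $\otimes$ that stores only finite entries and never enumerates infinities should bring the full run into the few-CPU-hours regime already alluded to in the introduction. The single genuinely dense product, $\matMin_{12}\otimes\matMin_{12}$, is a one-off cost and poses no real difficulty.
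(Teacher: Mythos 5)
Your proposal is correct and is essentially the paper's own route: Fact~\ref{fact23} is a purely computational statement, and the pipeline you describe (build $\matS_{12}$ by the dynamic program, form $\matAS_{12}=\matL\otimes\matS_{12}$, iterate with the sparse $\matT$ up to $p=125$, take $\matMin_{12}$ as the truncated entrywise minimum justified by Fact~\ref{fact126}, then compute $A'$ and minimize $A'+A'^T$) is exactly the computation the paper performs, including the correct identification of $(A'+A'^T)[w_1,w_3]$ with the four-term cyclic minimum used in Lemma~\ref{lem:bnm}.
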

This implies that 
$$ 
\min_{w_1,w_3} \ (\min_{w_2} \matMin_{12}[w_1,w_2]+\matMin_{12}[w_2,w_3]) \ +
\ (\min_{w_4} \matMin_{12}[w_3,w_4]+ \matMin_{12}[w_4,w_1]) \ = \ 76
$$
$$
\min_{w_1,w_2,w_3,w_4} \matMin_{12}[w_1,w_2]+\matMin_{12}[w_2,w_3]+\matMin_{12}[w_3,w_4]+
\matMin_{12}[w_4,w_1] \ = \ 76.
$$

\begin{theorem}
  If $24 \le n \le m$, then $$\gamma(\Gnm) = \form{n}{m}.$$
\end{theorem}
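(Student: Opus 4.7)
The upper bound $\gamma(G_{n,m}) \le \form{n}{m}$ is already provided by Chang's Lemma~\ref{lem:cgang}, so the task reduces to establishing the matching lower bound. By Lemma~\ref{lemloss}, it is equivalent to prove that $\lnm \ge 2n+2m-20$, because then
\[
\gamma(\Gnm) \;\ge\; \left\lceil \frac{nm + 2n + 2m - 20}{5} \right\rceil \;=\; \left\lceil \frac{(n+2)(m+2) - 24}{5} \right\rceil \;=\; \form{n}{m},
\]
the last identity being a routine case check modulo $5$ (for every residue $r\in\{0,\dots,4\}$ of $(n+2)(m+2)$, one has $\lceil(r-4)/5\rceil = 0$). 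The plan is therefore to bound $\bnm$ from below, since $\bnm \le \lnm$ by the construction of $\Bnm$.

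Now I apply Lemma~\ref{lem:bnm}, which gives
\[
\bnm \;\ge\; \min_{w_1,w_2,w_3,w_4 \in \mW} \matAS_{n-12}[w_1,w_2] + \matAS_{m-12}[w_2,w_3] + \matAS_{n-12}[w_3,w_4] + \matAS_{m-12}[w_4,w_1].
\]
Since $n \ge 24$, we have $n-12 \ge 12$, and by the definition of $\matMin_{12}= \min_{k\ge 0}(\matAS_{12+k}-k)$ we know that $\matAS_{n-12} \ge \matMin_{12} + (n-24)$ entrywise; an identical inequality holds for $\matAS_{m-12}$. Substituting into the above and pulling out the four additive shifts yields
\[
\bnm \;\ge\; \min_{w_1,w_2,w_3,w_4 \in \mW}\Big(\matMin_{12}[w_1,w_2] + \matMin_{12}[w_2,w_3] + \matMin_{12}[w_3,w_4] + \matMin_{12}[w_4,w_1]\Big) + 2(n-24) + 2(m-24).
\]
By Fact~\ref{fact23}, the minimum on the right equals $76$, so $\bnm \ge 76 + 2(n-24) + 2(m-24) = 2n + 2m - 20$, which is exactly what was needed.

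The only nontrivial ingredients are the computational facts invoked along the way: the stabilization $\matAS_{126} = \matAS_{125}+1$ (Fact~\ref{fact126}), which legitimizes the passage to $\matMin_{12}$ via $\matAS_q \ge \matMin_{12}+(q-12)$ for all $q\ge 12$, and the evaluation of $\min_{w_1,w_3}(A'+A'^T)[w_1,w_3] = 76$ with $A'=\matMin_{12}\otimes\matMin_{12}$ (Fact~\ref{fact23}). Everything else is a purely algebraic unwinding: Lemma~\ref{lemloss} converts the loss bound to the desired domination bound, and the $\lceil\cdot\rceil$/$\lfloor\cdot\rfloor$ identity above matches the formula $\form{n}{m}$ exactly, closing the gap between the Chang construction and the Guichard-style lower bound. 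The real obstacle, effectively already overcome in Fact~\ref{fact126} and Fact~\ref{fact23}, is the feasibility of the $(\min,+)$ matrix computations on $8119\times 8119$ matrices; the logical chain from there to the theorem is short.
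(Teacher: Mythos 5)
Your proposal is correct and follows essentially the same route as the paper's proof: Chang's construction for the upper bound, then $\lnm \ge \bnm$, Lemma~\ref{lem:bnm}, the shift inequality $\matAS_q \ge \matMin_{12} + (q-12)$ justified by Fact~\ref{fact126}, the value $76$ from Fact~\ref{fact23}, and Lemma~\ref{lemloss} with the modulo-$5$ check to recover $\form{n}{m}$. The only difference is cosmetic (you state the target bound $\lnm \ge 2n+2m-20$ up front and verify the ceiling identity explicitly), so nothing further is needed.
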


\begin{proof} By Chang's construction \cite{CHHW}, $\gamma(\Gnm) \le
  \form{n}{m}$. Let us now compute a lower bound for the loss of a
  dominating set of $\Gnm$.

  \begin{eqnarray*}
    \lnm &\ge& \bnm\\
    &\ge& \min_{w_1,w_2,w_3,w_4}  \matAS_{n-12}[w_1,w_2] +
    \matAS_{m-12}[w_2,w_3] + \matAS_{n-12}[w_3,w_4]+
    \matAS_{m-12}[w_4,w_1]\\
    & & \; \mbox{by Lemma~\ref{lem:bnm}} \\
    &\ge& \min_{w_1,w_2,w_3,w_4}  \matMin_{12}[w_1,w_2]+(n-12-12)  +
    \matMin_{12}[w_2,w_3]+ (m-12-12) + \matMin_{12}[w_3,w_4]\\&&
    \phantom{\min_{w_1,w_2,w_3,w_4}}+(n-12-12) +
    \matMin_{12}[w_4,w_1]+ (m-12-12)\\ 
    &\ge& 2 \times (n+m -48) + \min_{w_1,w_2,w_3,w_4}
    \matMin_{12}[w_1,w_2] + \matMin_{12}[w_2,w_3] +
    \matMin_{12}[w_3,w_4] + \matMin_{12}[w_4,w_1]\\ 
    &\ge& 2 \times (n+m -48) + 76\\ 
    &\ge& 2\times(n+m)-20.
  \end{eqnarray*}

  Thus by Lemma~\ref{lemloss}, we have:
  \begin{eqnarray*}
    \gamma(\Gnm) &\ge& \left \lceil \frac{n\times m + 2\times(n+m)-20}{5}
    \right \rceil\\
    &\ge& \left \lceil \frac{(n+2)(m + 2) -4}{5}
    \right \rceil - 4\\
    &\ge& \form{n}{m}.
  \end{eqnarray*}
\end{proof}


\begin{thebibliography}{99}

\bibitem{Chang}
  {\sc T.Y. Chang}, 
  {Domination Numbers of Grid Graphs},
  {\sl Ph.D. Thesis}, Department of Mathematics, University of South Florida, 1992.

\bibitem{CC}
  {\sc T.Y. Chang, W.E. Clark},
  {The domination numbers of the $5\times n$ and $6\times n$ grid graphs},
  {\sl J. Graph Theory}, {\bf 17} (1993) 81-108.

\bibitem{CCH}
  {\sc T.Y. Chang, W.E. Clark, E.O. Hare},
  {Dominations of complete grid graphs I},
  {\sl Ars Combin.}, {\bf 38} (1994) 97-112.

\bibitem{CHHW}
  {\sc E.J. Cockayne, E.O. Hare, S.T. Hedetniemi, T.V. Wimer}, 
  {Bounds for the Domination Number of Grid Graphs},
  {\sl Congressus Numeratium} {\bf 47} (1985) 217-228.

\bibitem{Fisher}
  {\sc David C. Fisher},
  The Domination number of complete grid graphs,
  {\sl manuscript}

\bibitem{Guichard}
  {\sc David R. Guichard}, 
  A lower bound for the domination number of complete grid graphs,
  {\sl J. Combin. Math. Combin. Comput.}  49  (2004), 215--220.

\bibitem{JK}
  {\sc M. S. Jacobson, L. F. Kinch},
  On the domination number of products of a graph; I,
  {\sl Ars Comb.} {\bf 10} (1983), pp. 33-44.

\bibitem{HHH}
  {\sc E.O. Hare, S.T. Hedetniemi, W.R. Hare}, 
  {Algorithms for Computing the Domination Number of $K\times N$ Complete Grid Graphs},
  {\sl Congressus Numeratium} {\bf 55} (1986) 81-92.

\bibitem{SP}
  {\sc H.G. Singh, and R.P. Pargas},
  {A parallel implementation for the domination number of grid graphs},
  {\sl Cong. Numer.} {\bf 59} (1987) 297-312.

\end{thebibliography}
\end{document}